\newtheorem{theorem}{Theorem}
\newtheorem{lemma}{Lemma}
\newtheorem{exx}{Example}
\newtheorem{assumption}{Assumption}
\renewcommand{\Re}{\ensuremath{\mathbb{R}}}
\title{Concurrent learning in high-order tuners for parameter identification}
\author{Justin H. Le and Andrew R. Teel\thanks{The authors are with the Electrical and Computer Engineering Department, University of California, Santa Barbara, USA. Email: {\em justinle@ucsb.edu}, {\em teel@ucsb.edu}. Research supported in part by the Air Force Office of Scientific Research under grant AFOSR FA9550-21-1-0452.}}
\begin{document}

\maketitle
\thispagestyle{empty}
\pagestyle{empty}

\begin{abstract}
High-order tuners are algorithms that show promise in achieving greater efficiency than classic gradient-based algorithms in identifying the parameters of parametric models and/or in facilitating the progress of a control or optimization algorithm whose adaptive behavior relies on such models. For high-order tuners, robust stability properties, namely uniform global asymptotic (and exponential) stability, currently rely on a persistent excitation (PE) condition. In this work, we establish such stability properties with a novel analysis based on a Matrosov theorem and then show that the PE requirement can be relaxed via a concurrent learning technique driven by sampled data points that are sufficiently rich. We show numerically that concurrent learning may greatly improve efficiency. We incorporate reset methods that preserve the stability guarantees while providing additional improvements that may be relevant in applications that demand highly accurate parameter estimates at relatively low additional cost in computation.
\end{abstract}

\section{Introduction}

The problem of identifying the parameters in a linear parametric model through the use of online measurements of input-output data has been studied extensively from the standpoint of the continuous-time gradient algorithm, which plays an important role in adaptive control \cite{narendra-2012}, \cite{tao-2003} and has found broad applications in areas such as optimal nonlinear control \cite{vamvoudakis-2010}, extremum seeking  \cite{poveda-benosman-2021}, and the analysis of machine learning algorithms \cite{gaudio-2019}. The gradient algorithm has been shown to achieve uniform global asymptotic (and exponential) stability of the desired parameter value, under a condition of persistent excitation which has been shown to be both necessary and sufficient \cite{Morgan77a}. Here, the distinction between uniform and non-uniform asymptotic stability has crucial implications in practice: unlike the non-uniform notion of stability, the uniform notion ensures robustness, in the sense of achieving ``total stability'' in the presence of bounded additive disturbances (\cite[Sec. 1B]{panteley-2001}) which are widely encountered in applications. 

Although persistent excitation characterizes uniform asymptotic stability for the gradient algorithm, the incorporation of ``concurrent learning'' has been shown to enable uniform asymptotic stability in the absence of persistent excitation \cite{chowdhary-2010}. Concurrent learning involves an augmentation of the gradient, using discretely sampled data measurements that collectively satisfy a condition of ``sufficient richness.'' Advantageously, sufficient richness can be characterized in terms of a rank condition on a matrix constructed from the data samples, a condition which can be computationally much simpler to verify in practice than the condition of persistent excitation.

In a separate thread of research, recent studies have shown that filtering the gradient can improve the performance of the gradient algorithm, sometimes significantly and in the absence of persistent excitation \cite{gaudio-dissertation}. Namely, the use of filtering can improve the convergence rate of the parameter estimate toward the desired parameter value or, in adaptive control problems, the convergence rate of the tracking error toward zero. The filtering procedures considered in \cite{gaudio-dissertation} and \cite{gaudio-2021} give rise to algorithms referred to as high-order tuners, which take inspiration from algorithms introduced in \cite{morse-1992}. In addition to offering improved performance in continuous-time settings, the recently developed high-order tuners serve as a basis for deriving novel discrete-time algorithms for various problems of estimation and learning with online data, with guarantees of efficiency in the sense of Nesterov's method for convex optimization \cite{moreu-2021} and of robustness in noisy and adversarial environments \cite{mcdonald-2021}, \cite{gaudio-2021b}. In continuous time, under a persistent excitation condition, uniform asymptotic stability properties of high-order tuners can be established using the method of analysis in \cite[Sec. 4.6]{annaswamy-2022}. However, the use of concurrent learning in high-order tuners has not yet been explored.

In this work, we establish uniform global asymptotic stability (UGAS) properties of continuous-time high-order tuners for parameter identification under two different conditions: $(1)$ persistent excitation and $(2)$ concurrent learning with sufficiently rich data. In Section \ref{sec:PE}, UGAS is established under persistent excitation, using an approach that we claim to be simpler than that of \cite[Sec. 4.6]{annaswamy-2022}. Whereas \cite{annaswamy-2022} shows uniform convergence by carefully examining solutions of a differential equation, we instead take advantage of a Matrosov theorem \cite{Loria05a}, which can be regarded as an analogue of the LaSalle invariance principle in the context of time-varying systems, with which uniform convergence is shown by combining infinitesimal conditions on Lyapunov-like functions together with observability-like conditions. In Section \ref{sec:CL}, we propose implementations of concurrent learning for high-order tuners in order to preserve UGAS in the absence of persistent excitation, given sufficiently rich data. We show that the resulting systems admit strict Lyapunov functions. In Section \ref{sec:softreset}, we propose the use of a technique inspired by reset methods in control and optimization (\cite{le-2021}, \cite{le-2021b}), which shows promise in improving the efficiency of high-order tuners that make use of concurrent learning. In Section \ref{sec:numerical}, numerical results show that concurrent learning can offer significant improvements in the convergence rate of high-order tuners for a parameter identification problem involving a regressor constructed from sinusoids.

\section{Notation and definitions}
For $x \in \Re^{n}$, we use $|x|:= \sqrt{x^{T} x}$. Given a pair $(x,u) \in \Re^{n} \times \Re^{m}$, by abuse of notation we sometimes consider $z:=(x,u)$ to be a vector in $\Re^{n+m}$. By the same abuse of notation, we sometimes write a function defined on $\Re^{n} \times \Re^{m}$ as a function defined on $\Re^{n + m}$, i.e.,  $f(x,u)$ may be written as $f(z)$ with $z=(x,u)$. A map $\varphi: \Re^{n} \times \Re_{\geq 0} \to \Re^{m}$ is said to be locally bounded in $x$ uniformly in $t$ if there exists a number $M > 0$ not dependent on $t$ such that $|\varphi(x, t)| \leq M$ for all $t$ and for all $x$ within a sufficiently small ball centered at the origin.

For a locally bounded function $f: \Re^{n} \times \Re_{\geq 0} \to \Re^{n}$ such that $x \mapsto f(x, t)$ is continuous uniformly in $t$ and $t \mapsto f(x, t)$ is piecewise continuous, the origin of the system $\dot{x} = f(x, t)$ is said to be uniformly globally stable (UGS) if there exists a class-$\mathcal{K}_{\infty}$ function $\gamma$ such that, for each initial condition $(x_{\circ}, t_{\circ}) \in \Re^{n} \times \Re_{\geq 0}$, each solution $x(\cdot)$ satisfies $|x(t)| \leq \gamma(|x_{\circ}|)$ for all $t \geq t_{\circ}$. The origin is said to be uniformly globally attractive (UGA) if for each $r > 0$ and $\sigma > 0$ there exists $T > 0$ such that, if the initial condition $(x_{\circ}, t_{\circ}) \in \Re^{n} \times \Re_{\geq 0}$ satisfies $|x_{\circ}| \leq r$, $|x(t)| \leq \sigma$ for all $t \geq t_{\circ} + T$. The origin is said to be uniformly globally asymptotically stable (UGAS) if it is UGS and UGA. The origin is said to be uniformly globally exponentially stable (UGES) if there exist $c > 0$ and $\alpha > 0$ such that, for each initial condition $(x_{\circ}, t_{\circ}) \in \Re^{n} \times \Re_{\geq 0}$, $|x(t)| \leq c|x_{\circ}| \exp(-\alpha(t - t_{0}))$ for all $t \geq t_{\circ}$.

\section{Uniform global asymptotic stability in high-order tuners via persistent excitation}
\label{sec:PE}

Suppose $y^{*}(t) = \phi^{T}(t)\theta^{*}$ for all $t \geq 0$, where $y^{*}: \mathbb{R}_{\geq 0} \rightarrow \mathbb{R}$ and $\phi: \mathbb{R}_{\geq 0} \rightarrow \mathbb{R}^{n}$ are known functions of time, and we wish to solve for $\theta^{*} \in \mathbb{R}^{n}$ online under the following assumption.  
\begin{assumption}
\label{assumption_PE}
The regressor $\phi(\cdot)$ is piecewise continuous, bounded, and persistently exciting. That is, there exist $M>0$, $T>0$, and $\delta>0$ such that $|\phi(t)| \leq M$ for all $t \geq 0$, and
$$
\int_{t}^{t+T} \phi(s) \phi^{T}(s) ds \geq \delta I \quad \forall t \geq 0.
$$
\end{assumption}

As a means of determining $\theta^{*}$, we follow the algorithmic development of \cite{gaudio-2021}, \cite[Ch. 5]{gaudio-dissertation}. Let $\theta \in \Re^{n}$, $y(t) \coloneqq \phi^{T}(t)\theta$, $\tilde{\theta} \coloneqq \theta - \theta^{*}$, $e_{y}(t) \coloneqq y(t) - y^{*} = \phi^{T}(t)\tilde{\theta}$, and $L_{t}(\theta) \coloneqq (1/2)\tilde{\theta}^{T}\phi(t)\phi^{T}(t)\tilde{\theta}$, so that $\nabla_{\theta} L_{t}(\theta) = \phi(t) e_{y}(t)$, and consider the following differential equation in the variable $x \coloneqq (\theta, \vartheta)$:
\begin{align}
\dot{x} &= f(x, t) \coloneqq \left[\begin{array}{c}-\beta(\theta - \vartheta)\mathcal{N}_{t} \\ -\gamma \nabla_{\theta} L_{t}(\theta)\end{array}\right], \label{eq:ht}
\end{align}
where $\beta \in \mathbb{R}_{> 0}$, $\gamma \in \mathbb{R}_{> 0}$, and $t \mapsto \mathcal{N}_{t}$ are to be selected for the purpose of achieving desired convergence properties for solutions of \eqref{eq:ht}. Choosing $\mathcal{N}_{t}$ to be dependent on $\phi(t)$ will be crucial for establishing stability properties of \eqref{eq:ht}. We focus on the case of
\begin{align}
\label{eq:Nt}
    \mathcal{N}_{t} \coloneqq 1 + \mu\phi^{T}(t)\phi(t) \quad \forall t \geq 0, \quad \mu \in \Re_{>0},
\end{align}
although other choices may be feasible (see \cite[Sec. 3]{ochoa-2021} for an example to consider). Equation \eqref{eq:ht} is referred to as a high-order tuner and is identical to \cite[Eq. 6]{gaudio-2021} and \cite[Eq. 5.6]{gaudio-dissertation}.

Another high-order tuner of interest is given as follows:
\begin{align}
\dot{x} &= f(x, t) \coloneqq \left[\begin{array}{c}-\beta(\theta - \vartheta) \\ \displaystyle{-\frac{\gamma}{\mathcal{N}_{t}}} \nabla_{\theta} L_{t}(\theta)\end{array}\right]. \label{eq:ht_normalized}
\end{align}
Equation \eqref{eq:ht_normalized} is identical to \cite[Eq. 6']{gaudio-2021} and \cite[Eq. 5.6']{gaudio-dissertation}.

For brevity, the dependence of $\phi$ and $e_{y}$ on $t$ will be suppressed hereafter.

\begin{theorem}
\label{thm:ugas_ht}
Under Assumption \ref{assumption_PE}, with $\mathcal{N}_{t}$ given by \eqref{eq:Nt}, if $\beta \geq 2\gamma/\mu$, the point $(\theta^{*}, \theta^{*})$ is uniformly globally asymptotically stable for \eqref{eq:ht}.
\end{theorem}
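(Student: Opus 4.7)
The plan is to pass to error coordinates, establish UGS via a weak Lyapunov function whose construction exploits a cross-term cancellation enabled by $\beta \geq 2\gamma/\mu$, and then invoke the nested Matrosov theorem of \cite{Loria05a} to upgrade UGS to UGAS by converting Assumption \ref{assumption_PE} into a decrease of an auxiliary function on the zero set of the first one.

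Concretely, I would introduce $\tilde\vartheta := \vartheta - \theta^*$ and $w := \theta - \vartheta$, so that $\tilde\theta = \tilde\vartheta + w$ and the target equilibrium becomes the origin of
\begin{align*}
\dot{\tilde\vartheta} = -\gamma \phi\phi^T(\tilde\vartheta + w), \qquad \dot w = -\beta \mathcal{N}_t w + \gamma\phi\phi^T(\tilde\vartheta + w).
\end{align*}
I take $V(\tilde\vartheta, w) := \tfrac{1}{2}|\tilde\vartheta|^2 + \tfrac{1}{2}|w|^2$. The mixed products $\pm\gamma(\phi^T\tilde\vartheta)(\phi^T w)$ arising from the two blocks of $\dot V$ are equal and opposite, yielding the clean expression $\dot V = -\gamma(\phi^T\tilde\vartheta)^2 + \gamma(\phi^T w)^2 - \beta(1 + \mu|\phi|^2)|w|^2$. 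Bounding $(\phi^T w)^2 \leq |\phi|^2|w|^2$ and using $\beta\mu \geq 2\gamma$ to write the $|\phi|^2|w|^2$ coefficient as $-(\beta\mu - \gamma) \leq -\gamma$ then gives $\dot V \leq -\gamma(\phi^T\tilde\vartheta)^2 - \beta|w|^2 - \gamma|\phi|^2|w|^2 \leq 0$. Since $V$ is radially unbounded in $(\tilde\vartheta, w)$ and $f$ is locally bounded in $x$ uniformly in $t$ (because $|\phi| \leq M$), this establishes UGS of the origin in the error coordinates, hence of $(\theta^*, \theta^*)$ for \eqref{eq:ht}.

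To upgrade UGS to UGA via the Matrosov framework of \cite{Loria05a}, I would take $V_1 := V$ and introduce a second auxiliary function $V_2(t, \tilde\vartheta) := -\tilde\vartheta^T \Psi(t)\tilde\vartheta$, where $\Psi(t) := \int_t^{t+T}(t + T - s)\phi(s)\phi^T(s)\,ds$. This $\Psi$ is continuously differentiable in $t$ and uniformly bounded by $T^2 M^2$; by Leibniz's rule $\dot\Psi(t) = -T\phi\phi^T + \int_t^{t+T}\phi\phi^T\,ds$, so Assumption \ref{assumption_PE} gives $-\tilde\vartheta^T\dot\Psi\tilde\vartheta \leq T(\phi^T\tilde\vartheta)^2 - \delta|\tilde\vartheta|^2$. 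The remaining contribution is $-2\tilde\vartheta^T \Psi \dot{\tilde\vartheta} = 2\gamma(\tilde\vartheta^T \Psi \phi)(\phi^T\tilde\vartheta + \phi^T w)$, which factors cleanly through $\phi^T\tilde\vartheta$ and $\phi^T w$ with a bounded multiplier $|\tilde\vartheta^T \Psi \phi| \leq T^2 M^3 |\tilde\vartheta|$ — exactly the quantities already dominated by the $-\gamma(\phi^T\tilde\vartheta)^2 - \gamma|\phi|^2 |w|^2$ terms in the $\dot V_1$ estimate (after an application of Young's inequality). This places the problem in the hypotheses of the nested Matrosov theorem: on the zero set of $\dot V_1$, namely $\phi^T\tilde\vartheta = 0$ and $w = 0$, the residual in $\dot V_2$ vanishes and one is left with $\dot V_2 \leq -\delta|\tilde\vartheta|^2$. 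Simultaneous vanishing of the Matrosov outputs then forces $\tilde\vartheta = 0$, which together with $w = 0$ yields UGA and, combined with UGS, UGAS.

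The main technical obstacle is the bookkeeping needed to absorb the cross terms $2\gamma(\tilde\vartheta^T \Psi \phi)(\phi^T\tilde\vartheta + \phi^T w)$ into the $V_1$-decay with the right constants so that the PE-induced $-\delta|\tilde\vartheta|^2$ contribution is not drowned out; choosing a sufficiently small weight in front of $V_2$ in the Matrosov sequence, along with the appearance of $\phi^T\tilde\vartheta$ and $\phi^T w$ in every ``bad'' term, is what makes the estimates close. Once this is in place, \cite{Loria05a} yields UGAS of the origin of the error system, and hence of $(\theta^*, \theta^*)$ for \eqref{eq:ht}.
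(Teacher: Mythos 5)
Your proof is correct and is essentially the paper's own argument: in your coordinates $(\tilde\vartheta,w)=(\tilde\theta+p,\,-p)$ your quadratic $V$ coincides (up to the factor $2/\gamma$) with the paper's $V_0$, and the passage from UGS to UGAS is the same nested Matrosov argument of \cite{Loria05a}, with your finite-horizon auxiliary function $-\tilde\vartheta^{T}\Psi(t)\tilde\vartheta$ playing exactly the role of the paper's exponentially discounted integral $V_1$. The only stylistic differences are the explicit cross-term cancellation in place of the paper's completion of squares and the kernel used in the PE-based auxiliary function; in particular, no small weight on $V_2$ or absorption of cross terms into the $V_1$-decay is actually needed, since the nested conditions you verify ($Y_1=0$ forces $w=0$ and $\phi^{T}\tilde\vartheta=0$, hence $Y_2\le-\delta|\tilde\vartheta|^{2}$, and joint vanishing forces the origin) are already exactly the hypotheses of the Matrosov theorem.
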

\begin{proof}
Recalling that $\tilde{\theta} = \theta - \theta^{*}$, let $p \coloneqq \vartheta - \theta$ and $\tilde{x} \coloneqq (\tilde{\theta}, p)$. We will use Matrosov's theorem \cite{Loria05a} to establish that the origin is UGAS for the system
\begin{align}
\label{eq:ht_tilde}
\dot{\tilde{x}} &= \tilde{f}(\tilde{x}, t) \coloneqq \left[\begin{array}{c}\beta \mathcal{N}_{t}p \\ -\beta \mathcal{N}_{t}p -\gamma \nabla_{\theta} L_{t}\left(\tilde{\theta} + \theta^{*}\right)\end{array}\right],
\end{align}
which will imply that $(\theta^{*}, \theta^{*})$ is UGAS for \eqref{eq:ht}. To begin, consider the Lyapunov function candidate
\begin{align}
V_{0}(\tilde{x}) &\coloneqq \frac{1}{\gamma}\left|\tilde{\theta} + p\right|^{2} + \frac{1}{\gamma}|p|^{2}. \label{eq:V0}
\end{align}
which is radially unbounded, positive definite, and continuously differentiable. For all $(\tilde{x}, t) \in \mathbb{R}^{2n} \times \mathbb{R}_{\geq 0}$, we have
\begin{align*}
&\left\langle\nabla V_{0}(\tilde{x}), \tilde{f}(\tilde{x}, t)\right\rangle = \frac{2}{\gamma}\Bigg[-\left\langle\tilde{\theta} + p, \; \gamma \nabla_{\theta} L_{t}\left(\tilde{\theta} + \theta^{*}\right)\right\rangle \\
&\qquad- \left\langle p, \; \beta \mathcal{N}_{t}p + \gamma \nabla_{\theta} L_{t}\left(\tilde{\theta} + \theta^{*}\right)\right\rangle \Bigg] \\
&= -2\left\langle\tilde{\theta}, \nabla_{\theta} L_{t}\left(\tilde{\theta} + \theta^{*}\right)\right\rangle - \frac{2\beta\mathcal{N}_{t}}{\gamma}|p|^{2} \\
&\qquad- 4\left\langle p, \; \nabla_{\theta} L_{t}\left(\tilde{\theta} + \theta^{*}\right)\right\rangle.
\end{align*}
Substituting $\nabla_{\theta} L_{t}\left(\tilde{\theta} + \theta^{*}\right) = \phi\phi^{T}\tilde{\theta}$ and $e_{y} = \phi^{T}\tilde{\theta}$, followed by $\mathcal{N}_{t} = 1 + \mu\phi^{T}\phi$, we have
\begin{align*}
&\left\langle\nabla V_{0}(\tilde{x}), \tilde{f}(\tilde{x}, t)\right\rangle \leq -2\left|e_{y}\right|^{2} - \frac{2\beta\mathcal{N}_{t}}{\gamma}|p|^{2} + 4|p||\phi||e_{y}| \\
&\qquad = -2\left|e_{y}\right|^{2} - \frac{2\beta}{\gamma}|p|^{2} - \frac{2\beta\mu}{\gamma}|\phi|^{2}|p|^{2} + 4|p||\phi||e_{y}|.
\end{align*}
With $\beta \geq 2\gamma/\mu$, it follows that
\begin{align}
&\left\langle\nabla V_{0}(\tilde{x}), \tilde{f}(\tilde{x}, t)\right\rangle \nonumber \\
&\qquad \leq -2\left|e_{y}\right|^{2} - \frac{2\beta}{\gamma}|p|^{2} - 4|\phi|^{2}|p|^{2} + 4|p||\phi||e_{y}| \nonumber \\
&\qquad = -\frac{2\beta}{\gamma}|p|^{2} - |e_{y}|^{2} - \left[|e_{y}| - 2|p||\phi|\right]^{2} \nonumber \\
&\qquad\leq -\frac{2\beta}{\gamma}|p|^{2} - |e_{y}|^{2} \eqqcolon Y_{0}(\tilde{x}, e_{y}) \leq 0, \label{eq:Y0}
\end{align}
and hence the origin of \eqref{eq:ht_tilde} is uniformly globally stable.
Next, we establish uniform global attractivity by building Matrosov functions as follows. Let
\begin{align}
\label{eq:V1}
V_{1}(\tilde{x}, t) &\coloneqq -\tilde{\theta}^{T}\left(\int_{t}^{\infty} \exp(t - \tau) \phi(\tau) \phi^{T}(\tau) d\tau\right) \tilde{\theta},
\end{align}
and note that Assumption \ref{assumption_PE} implies
\begin{align*}
V_{1}(\tilde{x}, t) &\leq -\exp(-T) \delta \tilde{\theta}^{T} \tilde{\theta} \quad \forall(\tilde{x}, t) \in \mathbb{R}^{2 n} \times \mathbb{R}_{\geq 0}.
\end{align*}
Then, for all $(\tilde{x}, t) \in \mathbb{R}^{2n} \times \mathbb{R}_{\geq 0}$, it holds that
\begin{align}
&\frac{\partial V_{1}(\tilde{x}, t)}{\partial t} + \frac{\partial V_{1}(\tilde{x}, t)}{\partial \tilde{x}} f(\tilde{x}, t) \nonumber \\
&\leq V_{1}(\tilde{x}, t) + \tilde{\theta}^{T} \phi \phi^{T} \tilde{\theta} + \beta M^{2}(1 + \mu M^{2})|\tilde{\theta}||p| \nonumber \\
&= V_{1}(\tilde{x}, t) + |e_{y}|^{2} + \beta M^{2}(1 + \mu M^{2})|\tilde{\theta}||p| \nonumber \\
&\leq -\exp(-T) \delta \tilde{\theta}^{T} \tilde{\theta} + |e_{y}|^{2} + \beta M^{2}(1 + \mu M^{2})|\tilde{\theta}||p| \label{eq:Y1} \\
&\eqqcolon Y_{1}(\tilde{x}, e_{y}). \nonumber
\end{align}
Note that $Y_{0}(\tilde{x}, e_{y}) = 0$ implies $p = 0$ and $e_{y} = 0$, which implies $Y_{1}(\tilde{x}, e_{y}) = -\exp(-T) \delta \tilde{\theta}^{T} \tilde{\theta} \leq 0$. Also note that $Y_{0}(\tilde{x}, e_{y}) = Y_{1}(\tilde{x}, e_{y}) = 0$ implies $p = 0$ and $\tilde{\theta} = 0$. Finally, note that $V_{0}$ is time-invariant, the maps $(\tilde{x}, t) \mapsto V_{1}(\tilde{x}, t)$ and $(\tilde{x}, t) \mapsto \phi^{T}(t)\tilde{\theta}$ are each locally bounded in $\tilde{x}$ uniformly in $t$, and both $Y_{0}$ and $Y_{1}$ are continuous. Thus, the conditions of Matrosov's theorem \cite{Loria05a} hold, and we conclude that the origin of \eqref{eq:ht_tilde} is UGAS.
\end{proof}

\begin{theorem}
\label{thm:ugas_ht_normalized}
Under Assumption \ref{assumption_PE}, with $\mathcal{N}_{t}$ given by \eqref{eq:Nt}, if $\beta \geq 2\gamma/\mu$, the point $(\theta^{*}, \theta^{*})$ is uniformly globally asymptotically stable for \eqref{eq:ht_normalized}.
\end{theorem}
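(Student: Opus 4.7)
The plan is to adapt the Matrosov argument of Theorem~\ref{thm:ugas_ht} to the structure of \eqref{eq:ht_normalized}. Under the same change of variables $\tilde\theta \coloneqq \theta - \theta^{*}$, $p \coloneqq \vartheta - \theta$, the transformed system becomes
\[
\dot{\tilde\theta} = \beta p, \qquad \dot p = -\beta p - \frac{\gamma}{\mathcal{N}_{t}}\nabla_{\theta} L_{t}(\tilde\theta + \theta^{*}),
\]
so that, compared with \eqref{eq:ht_tilde}, the factor $\mathcal{N}_{t}$ no longer multiplies the first component but instead appears as $1/\mathcal{N}_{t}$ in the second. I would then show that the origin of this transformed system is UGAS, which implies UGAS of $(\theta^{*},\theta^{*})$ for \eqref{eq:ht_normalized}.

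I would reuse the Lyapunov candidate $V_{0}$ from \eqref{eq:V0}. A convenient cancellation arises: $\dot{(\tilde\theta + p)} = -(\gamma/\mathcal{N}_{t})\phi e_{y}$, from which
\[
\langle\nabla V_{0}(\tilde x), \tilde f(\tilde x, t)\rangle = -\frac{2|e_{y}|^{2}}{\mathcal{N}_{t}} - \frac{2\beta}{\gamma}|p|^{2} - \frac{4(p^{T}\phi)e_{y}}{\mathcal{N}_{t}}.
\]
I would then apply Young's inequality to the cross term with a weight $\epsilon > 0$ and invoke the bound $|\phi|^{2}/\mathcal{N}_{t} \leq 1/\mu$ (immediate from \eqref{eq:Nt}) to obtain
\[
\dot V_{0} \leq -\frac{(2 - \epsilon)|e_{y}|^{2}}{\mathcal{N}_{t}} - \left[\frac{2\beta}{\gamma} - \frac{4}{\epsilon\mu}\right]|p|^{2}.
\]
For $\beta \geq 2\gamma/\mu$, any $\epsilon \in (1,2)$ (e.g.\ $\epsilon = 4/3$) yields strictly positive coefficients, so $\dot V_{0} \leq -\alpha_{1}|e_{y}|^{2}/\mathcal{N}_{t} - \alpha_{2}|p|^{2}$ with $\alpha_{1}, \alpha_{2} > 0$. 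Since $|\phi| \leq M$ gives $\mathcal{N}_{t} \leq 1 + \mu M^{2}$, the right-hand side is bounded above by a continuous function $Y_{0}(\tilde x, e_{y})$ of $(\tilde x, e_{y})$ alone, satisfying $Y_{0} = 0 \Rightarrow p = 0,\; e_{y} = 0$. This yields UGS.

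For uniform attractivity, I would reuse $V_{1}$ from \eqref{eq:V1}. The derivative $\dot V_{1}$ is in fact simpler than in Theorem~\ref{thm:ugas_ht} because $\dot{\tilde\theta} = \beta p$ carries no $\mathcal{N}_{t}$ factor, producing a cross term bounded by a constant multiple of $|\tilde\theta||p|$ without the $(1+\mu M^{2})$ inflation present in \eqref{eq:Y1}. Persistent excitation again gives an upper bound $\dot V_{1} \leq Y_{1}(\tilde x, e_{y}) = -\exp(-T)\delta|\tilde\theta|^{2} + |e_{y}|^{2} + C|\tilde\theta||p|$ continuous in $(\tilde x, e_{y})$, with $Y_{0} = Y_{1} = 0 \Rightarrow \tilde\theta = p = 0$. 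The remaining Matrosov hypotheses (time-invariance of $V_{0}$; local boundedness of $V_{1}$ and $\phi^{T}\tilde\theta$ in $\tilde x$ uniformly in $t$; continuity of $Y_{0}, Y_{1}$) are immediate, so Matrosov's theorem yields UGAS.

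The main obstacle I anticipate is the Young's weight tuning: whereas in Theorem~\ref{thm:ugas_ht} the term $-(2\beta\mathcal{N}_{t}/\gamma)|p|^{2}$ intrinsically carried the $\mu|\phi|^{2}$ factor needed to absorb the cross term, the normalized case requires the bound $|\phi|^{2}/\mathcal{N}_{t} \leq 1/\mu$ to be invoked with a weight chosen precisely to match the hypothesis $\beta \geq 2\gamma/\mu$; everything else in the Matrosov construction transfers essentially unchanged.
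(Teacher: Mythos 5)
Your proposal is correct and follows essentially the same route as the paper: the same change of variables, the same Matrosov construction with $V_{0}$ from \eqref{eq:V0} and $V_{1}$ from \eqref{eq:V1}, and the same identification of $Y_{0}=Y_{1}=0$ forcing $\tilde{\theta}=p=0$. The only (cosmetic) difference is that the paper obtains the $V_{0}$ decrease immediately by noting that the right-hand side of the normalized error system is $1/\mathcal{N}_{t}$ times that of \eqref{eq:ht_tilde} and scaling \eqref{eq:Y0}, whereas you rederive it via Young's inequality and the bound $|\phi|^{2}/\mathcal{N}_{t}\leq 1/\mu$, which is slightly longer but equally valid.
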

\begin{proof}
Recalling that $\tilde{\theta} = \theta - \theta^{*}$, let $p \coloneqq \vartheta - \theta$ and $\tilde{x} \coloneqq (\tilde{\theta}, p)$. Reusing notation from the proof of Theorem \ref{thm:ugas_ht}, we will use Matrosov's theorem \cite{Loria05a} to establish that the origin is UGAS for the system
\begin{align}
\label{eq:ht_normalized_tilde}
\dot{\tilde{x}} &= \tilde{f}(\tilde{x}, t) \coloneqq \left[\begin{array}{c}\beta p \\ \displaystyle{-\beta p-\frac{\gamma}{\mathcal{N}_{t}}} \nabla_{\theta} L_{t}\left(\tilde{\theta} + \theta^{*}\right)\end{array}\right],
\end{align}
which will imply that $(\theta^{*}, \theta^{*})$ is UGAS for \eqref{eq:ht_normalized}. To begin, consider the Lyapunov function candidate \eqref{eq:V0}, which is radially unbounded, positive definite, and continuously differentiable. Observing that the right-hand side of \eqref{eq:ht_normalized_tilde} can be obtained by multiplying the right-hand side of \eqref{eq:ht_tilde} by $1/\mathcal{N}_{t}$, we have from \eqref{eq:Y0} that, for all $(\tilde{x}, t) \in \mathbb{R}^{2n} \times \mathbb{R}_{\geq 0}$ and for $\beta \geq 2\gamma/\mu$,
\begin{align}
&\left\langle\nabla V_{0}(\tilde{x}), \tilde{f}(\tilde{x}, t)\right\rangle \nonumber \\
&\qquad\leq \frac{1}{\mathcal{N}_{t}} \left\{-\frac{2\beta}{\gamma}|p|^{2} - \left|e_{y}\right|^{2}\right\} \label{eq:Y0_normalized} \\
&\qquad= \frac{1}{1 + \mu\phi^{T}\phi} \left\{-\frac{2\beta}{\gamma}|p|^{2} - \tilde{\theta}^{T}\phi\phi^{T}\tilde{\theta}\right\} \nonumber \\
&\qquad\eqqcolon Y_{0}(\tilde{x}, \phi) \leq 0, \nonumber
\end{align}
and hence the origin of \eqref{eq:ht_normalized_tilde} is uniformly globally stable.
Next, we establish uniform global attractivity by building Matrosov functions as follows. Let $V_{1}$ be defined as in \eqref{eq:V1} so that, following the steps leading up to \eqref{eq:Y1}, we may write, for all $(\tilde{x}, t) \in \mathbb{R}^{2n} \times \mathbb{R}_{\geq 0}$,
\begin{align*}
&\frac{\partial V_{1}(\tilde{x}, t)}{\partial t} + \frac{\partial V_{1}(\tilde{x}, t)}{\partial \tilde{x}} f(\tilde{x}, t) \\
&\qquad\leq -\exp(-T) \delta \tilde{\theta}^{T} \tilde{\theta} + \tilde{\theta}^{T}\phi\phi^{T}\tilde{\theta} + \beta M^{2}|\tilde{\theta}||p| \\
&\qquad\eqqcolon Y_{1}(\tilde{x}, \phi).
\end{align*}
Note that $Y_{0}(\tilde{x}, \phi) = 0$ implies $p = 0$ and $\phi^{T}\tilde{\theta} = 0$, which implies $Y_{1}(\tilde{x}, \phi) = -\exp(-T) \delta \tilde{\theta}^{T} \tilde{\theta} \leq 0$. Also note that $Y_{0}(\tilde{x}, \phi) = Y_{1}(\tilde{x}, \phi) = 0$ implies $p = 0$ and $\tilde{\theta} = 0$. Finally, note that $V_{0}$ is time-invariant, the maps $(\tilde{x}, t) \mapsto V_{1}(\tilde{x}, t)$ and $(\tilde{x}, t) \mapsto \phi(t)$ are each locally bounded in $\tilde{x}$ uniformly in $t$, and both $Y_{0}$ and $Y_{1}$ are continuous. Thus, the conditions of Matrosov's theorem \cite{Loria05a} hold, and we conclude that the origin of \eqref{eq:ht_normalized_tilde} is UGAS.
\end{proof}

Theorems \ref{thm:ugas_ht} and \ref{thm:ugas_ht_normalized} also establish uniform global exponential stability (UGES), due to linearity of the systems \eqref{eq:ht} and \eqref{eq:ht_normalized} and the fact that UGAS is equivalent to UGES for linear time-varying systems \cite[Thm. 58.7]{Hahn67a}.

Our assumptions differ from those of \cite{gaudio-2021} only in regards to the regressor’s properties. Namely, the analyses previously reported in \cite[Thm. 2]{gaudio-2021} and \cite[Remark 8]{gaudio-2021} require that the regressor has a bounded time derivative, whereas our analyses do not require differentiability of the regressor but instead require that it be persistently exciting. As a consequence, the previous analyses can establish only that the output error $e_{y} \coloneqq \phi^{T}\tilde{\theta}$ tends to $0$ and not necessarily uniformly, whereas Theorems \ref{thm:ugas_ht} and \ref{thm:ugas_ht_normalized} establish that the parameter error $\tilde{\theta}$ tends to $0$ uniformly.

\section{Concurrent learning for high-order tuners}
\label{sec:CL}

\subsection{Stability analysis}

Let $\left\{(\phi\left(t_{k}), \; y^{*}(t_{k})\right)\right\}_{k=1}^{N}$ be a sequence of recorded data. Define $B: \mathbb{R}^{n} \times \mathbb{R}_{\geq 0} \rightarrow \mathbb{R}^{n}$ as
\begin{align}
B(\theta, \mu) \coloneqq \sum_{k=1}^{N} \frac{\phi\left(t_{k}\right)}{1 + \mu\phi^{T}\left(t_{k}\right)\phi\left(t_{k}\right)} \left(\phi^{T}\left(t_{k}\right) \theta - y^{*}\left(t_{k}\right)\right). \label{eq:B}
\end{align}
In \eqref{eq:ht}, we implement concurrent learning (CL) in the sense of, e.g., \cite{chowdhary-2010},  \cite{chowdhary-2013}, and \cite{ochoa-2021}, as follows:
\begin{align}
\label{eq:ht_CL}
\dot{x} &= f(x, t) \coloneqq \left[\begin{array}{c}-\beta(\theta - \vartheta)\mathcal{N}_{t} \\ - \gamma \left(\nabla_{\theta} L_{t}(\theta) + \mathcal{N}_{t}B(\theta, \mu)\right)\end{array}\right],
\end{align}
where $\mathcal{N}_{t}$ and $\mu$ are given by \eqref{eq:Nt}, and $\beta$ and $\gamma$ are positive real numbers. The $B$-term involves a factor of $\mathcal{N}_{t}$ for reasons that will become clear in the stability analysis. 

In \eqref{eq:ht_normalized}, we implement CL as follows:
\begin{align}
\label{eq:ht_normalized_CL}
\dot{x} &= f(x, t) \coloneqq \left[\begin{array}{c}-\beta(\theta - \vartheta) \\ - \gamma \left(\frac{1}{\mathcal{N}_{t}} \nabla_{\theta} L_{t}(\theta) + B(\theta, \mu)\right)\end{array}\right],
\end{align}
where $\mathcal{N}_{t}$ and $\mu$ are given by \eqref{eq:Nt}, and $\beta$ and $\gamma$ are positive real numbers.

UGAS properties for \eqref{eq:ht_CL} and \eqref{eq:ht_normalized_CL} can be shown if the data satisfies the following property, which is characterized by the subsequent lemma.
\begin{assumption}
\label{assumption_SR}
The regressor data $\left\{\phi\left(t_{k}\right)\right\}_{k=1}^{N}$ is sufficiently rich (SR) in the sense that the matrix $${\mathcal{D} \coloneqq \left[\phi\left(t_{1}\right), \; \phi\left(t_{2}\right), \; \cdots, \;  \phi\left(t_{N}\right)\right] \in \mathbb{R}^{n \times N}}$$ has rank $n$.
\end{assumption}
\begin{lemma}
\label{thm:SR}
For a given $\mu \in \Re_{\geq 0}$, Assumption \ref{assumption_SR} holds if and only if there exists $\delta_{\mu} \in \mathbb{R}_{>0}$ such that $$P_{\mu} \coloneqq \sum_{k=1}^{N} \frac{\phi\left(t_{k}\right) \phi^{T}\left(t_{k}\right)}{1 + \mu \phi^{T}\left(t_{k}\right) \phi\left(t_{k}\right)} \geq \delta_{\mu} I_{n}.$$
\end{lemma}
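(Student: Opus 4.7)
The plan is to observe that $P_\mu$ admits the factorization $P_\mu = \mathcal{D} W_\mu \mathcal{D}^T$, where $W_\mu \in \Re^{N \times N}$ is the diagonal matrix with entries $w_k := 1/(1 + \mu|\phi(t_k)|^2)$. Since $\mu \geq 0$ and each $w_k > 0$, the matrix $W_\mu$ is symmetric positive definite, so it admits a square root $W_\mu^{1/2}$. Then for every $v \in \Re^n$,
\begin{equation*}
v^T P_\mu v \;=\; v^T \mathcal{D} W_\mu \mathcal{D}^T v \;=\; \bigl\| W_\mu^{1/2} \mathcal{D}^T v \bigr\|^2 \;\geq\; 0,
\end{equation*}
which already shows that $P_\mu$ is symmetric positive semidefinite. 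The argument then reduces to characterizing when this quadratic form is strictly positive for all nonzero $v$.

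For the ``if'' direction, I would assume Assumption \ref{assumption_SR} holds, so $\mathcal{D}$ has rank $n$ (full row rank). Then $\mathcal{D}^T v = 0$ forces $v = 0$, and by positive definiteness of $W_\mu$, the expression $\bigl\| W_\mu^{1/2} \mathcal{D}^T v \bigr\|^2$ is strictly positive for every $v \neq 0$. Hence $P_\mu$ is symmetric positive definite, and one can take $\delta_\mu$ to be its smallest eigenvalue (positive by a standard compactness argument on the unit sphere in $\Re^n$).

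For the ``only if'' direction, I would argue by contraposition. Suppose $\mathcal{D}$ has rank strictly less than $n$. Then there exists $v \in \Re^n \setminus \{0\}$ with $\mathcal{D}^T v = 0$, so $v^T P_\mu v = 0$, and no positive $\delta_\mu$ can satisfy $P_\mu \geq \delta_\mu I_n$ (since this would require $v^T P_\mu v \geq \delta_\mu |v|^2 > 0$). Both directions together yield the stated equivalence.

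I do not expect any serious obstacle: this is essentially a linear-algebra fact about weighted Gram matrices. The only mild subtlety is ensuring that the argument covers $\mu = 0$, where $W_\mu = I_N$, and that the inequality $P_\mu \geq \delta_\mu I_n$ is understood in the positive semidefinite ordering so that the positive-definiteness of $P_\mu$ is exactly equivalent to the existence of some $\delta_\mu > 0$ with $P_\mu \geq \delta_\mu I_n$.
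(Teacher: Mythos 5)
Your proof is correct and follows essentially the same route as the paper's: both arguments reduce to the positivity of the weights $1/(1+\mu|\phi(t_k)|^{2})$ and the equivalence between $\mathcal{D}$ having rank $n$ and the nonexistence of a nonzero $v$ with $\phi^{T}(t_k)v=0$ for all $k$, with your factorization $P_{\mu}=\mathcal{D}W_{\mu}\mathcal{D}^{T}$ merely packaging the paper's term-by-term estimate more compactly and your explicit appeal to the smallest eigenvalue filling in the compactness step the paper leaves implicit. The only cosmetic slip is that your ``if''/``only if'' labels are swapped relative to the statement of the lemma, but both implications are proved correctly.
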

\begin{proof}
Let $\mu \in \Re_{\geq 0}$ be given. First, we show the forward implication. Assuming that $\mathcal{D}$ has rank $n$, it follows that, for any non-zero $x \in \Re^{n}$, there exists $k \in \{1, \ldots, N\}$ such that $\phi^{T}(t_{k})x \neq 0$. (If it were not true, the columns of $\mathcal{D}$ would not span $\Re^{n}$.) In other words, for any non-zero $x \in \Re^{n}$, there exists $k$ such that $x^{T}\phi(t_{k})\phi^{T}(t_{k})x > 0$. It follows that, for any non-zero $x \in \Re^{n}$, $x^{T}P_{\mu}x > 0$, and hence there exists $\delta_{\mu} \in \Re_{>0}$ (which generally depends on $\mu$) such that $x^{T}P_{\mu}x \geq \delta_{\mu}$. Next, we show the reverse implication. Assuming that there exists $\delta_{\mu} \in \Re_{>0}$ such that $P_{\mu} \geq \delta_{\mu} I_{n}$, it follows that, for any non-zero $x \in \Re^{n}$, there exists $k \in \{1, \ldots, N\}$ such that $x^{T}\phi(t_{k})\phi^{T}(t_{k})x > 0$. That is, there exists $k$ such that $\phi^{T}(t_{k})x \neq 0$. Then, because $x$ is an arbitrary non-zero vector in $\Re^{n}$, it follows that there are $n$ linearly independent columns of $\mathcal{D}$.
\end{proof}
%The constant $\delta$ is called the level of richness of the data.

\begin{theorem}
\label{thm:ugas_ht_CL}
Under Assumption \ref{assumption_SR}, if $\beta \geq 2\gamma/\mu$, the point $(\theta^{*}, \theta^{*})$ is UGAS for \eqref{eq:ht_CL}.
\end{theorem}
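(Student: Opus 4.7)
The plan is to build a strict Lyapunov function for the error system by augmenting $V_{0}$ from \eqref{eq:V0} with a purely $\tilde\theta$-dependent quadratic correction weighted by $P_{\mu}$. First, I would change coordinates via $\tilde\theta \coloneqq \theta - \theta^{*}$ and $p \coloneqq \vartheta - \theta$ as in Theorem \ref{thm:ugas_ht}. Since $y^{*}(t_{k}) = \phi^{T}(t_{k})\theta^{*}$, the concurrent-learning term collapses to $B(\theta,\mu) = P_{\mu}\tilde\theta$, and the tilde-dynamics read $\dot{\tilde\theta} = \beta\mathcal{N}_{t}p$ and $\dot{p} = -\beta\mathcal{N}_{t}p - \gamma(\phi\phi^{T} + \mathcal{N}_{t}P_{\mu})\tilde\theta$, which differs from \eqref{eq:ht_tilde} only by the additional $-\gamma\mathcal{N}_{t}P_{\mu}\tilde\theta$ forcing in the $p$-equation.

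Next, I propose the candidate
\[
V(\tilde{x}) \coloneqq V_{0}(\tilde{x}) + \tfrac{2}{\beta}\tilde\theta^{T} P_{\mu} \tilde\theta.
\]
By Lemma \ref{thm:SR}, $P_{\mu} \geq \delta_{\mu}I$ under Assumption \ref{assumption_SR}, so $V$ is continuously differentiable, positive definite, radially unbounded, and quadratically bounded in $\tilde{x}$. Computing $\dot V$, the extra CL-forcing in $\dot{p}$ contributes two new terms to $\langle\nabla V_{0}(\tilde{x}), \tilde{f}(\tilde{x}, t)\rangle$ relative to the expression in the proof of Theorem \ref{thm:ugas_ht}, namely the strictly damping $-2\mathcal{N}_{t}\tilde\theta^{T}P_{\mu}\tilde\theta$ and a cross term $-4\mathcal{N}_{t}p^{T}P_{\mu}\tilde\theta$. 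The time-derivative of the correction is $4\mathcal{N}_{t}\tilde\theta^{T}P_{\mu}p$, which is exactly the negative of this cross term. Thus the cross term is eliminated, leaving only the terms appearing in the PE analysis plus $-2\mathcal{N}_{t}\tilde\theta^{T}P_{\mu}\tilde\theta$.

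The surviving PE-type terms are handled exactly as in Theorem \ref{thm:ugas_ht}: under $\beta \geq 2\gamma/\mu$, the usual completion of squares collapses $-\tfrac{2\beta\mathcal{N}_{t}}{\gamma}|p|^{2} - 2|e_{y}|^{2} + 4|p||\phi||e_{y}|$ into $-\tfrac{2\beta}{\gamma}|p|^{2} - |e_{y}|^{2}$. Invoking $\mathcal{N}_{t} \geq 1$ and $P_{\mu} \geq \delta_{\mu}I$ on the residual CL term then yields
\[
\dot V \leq -\tfrac{2\beta}{\gamma}|p|^{2} - |e_{y}|^{2} - 2\delta_{\mu}|\tilde\theta|^{2},
\]
which is negative definite in $(\tilde\theta, p)$. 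Because $V$ is sandwiched between two positive-definite quadratics and $\dot V$ dominates a negative-definite quadratic, the origin of the tilde system is UGAS (in fact UGES, by linearity), equivalently $(\theta^{*}, \theta^{*})$ is UGAS for \eqref{eq:ht_CL}.

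The main obstacle is identifying the right correction. Naive choices such as $V_{0}$ alone or $V_{0}$ plus a plain $|\tilde\theta|^{2}$ leave a cross term scaling like $\mathcal{N}_{t}\|P_{\mu}\|$, and $\|P_{\mu}\|$ is a data-dependent quantity uncontrolled by $\mu$, so generic Young-type bounds fail under $\beta \geq 2\gamma/\mu$. Weighting the correction by the specific constant $2/\beta$ and by $P_{\mu}$ itself produces an exact cancellation rather than a bound, and this is what justifies, a posteriori, the design choice of attaching the factor $\mathcal{N}_{t}$ to $B(\theta,\mu)$ in \eqref{eq:ht_CL}.
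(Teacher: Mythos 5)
Your proposal is correct and follows essentially the same route as the paper: the same error coordinates, the identical Lyapunov function $V_{0}(\tilde{x}) + \tfrac{2}{\beta}\tilde{\theta}^{T}P_{\mu}\tilde{\theta}$ of \eqref{eq:V_CL}, the same exact cancellation of the cross term $-4\mathcal{N}_{t}\langle p, P_{\mu}\tilde{\theta}\rangle$ by the derivative of the correction, and the same appeal to Lemma \ref{thm:SR} to conclude negative definiteness and hence UGAS (and UGES by linearity). Retaining the $-|e_{y}|^{2}$ term and lower-bounding $P_{\mu}$ by $\delta_{\mu}I$ are harmless cosmetic differences from the paper's final bound $-2\tilde{\theta}^{T}P_{\mu}\tilde{\theta} - \tfrac{2\beta}{\gamma}|p|^{2}$.
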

\begin{proof}
Let $\tilde{\theta} \coloneqq \theta - \theta^{*}$, $p \coloneqq \vartheta - \theta$, and $\tilde{x} \coloneqq (\tilde{\theta}, p)$. Due to the fact that $y^{*}\left(t_{k}\right) = \phi^{T}\left(t_{k}\right)\theta^{*}$, we have that $B(\theta, \mu) = P_{\mu}\tilde{\theta}$, and (reusing notation from previous proofs) \eqref{eq:ht_CL} can be written as
\begin{align}
\label{eq:ht_tilde_CL}
\dot{\tilde{x}} &= \tilde{f}(\tilde{x}, t) \coloneqq \left[\!\!\begin{array}{c}\beta\mathcal{N}_{t}p  \\ -\beta\mathcal{N}_{t}p - \gamma \left(\nabla_{\theta} L_{t}(\tilde{\theta} + \theta^{*}) + \mathcal{N}_{t}P_{\mu}\tilde{\theta}\right)\end{array}\!\!\right].
\end{align}
To show that $(\theta^{*}, \theta^{*})$ is UGAS for \eqref{eq:ht_CL}, it suffices to show that the origin of \eqref{eq:ht_tilde_CL} is UGAS. Consider the Lyapunov function candidate
\begin{align}
V(\tilde{x}) &\coloneqq \frac{1}{\gamma}\left|\tilde{\theta} + p\right|^{2} + \frac{1}{\gamma}|p|^{2} + \frac{2}{\beta}\tilde{\theta}^{T}P_{\mu}\tilde{\theta}, \label{eq:V_CL}
\end{align}
which is radially unbounded, positive definite, and continuously differentiable. Following steps similar to those leading up to \eqref{eq:Y0}, it can be shown that, for all $(\tilde{x}, t) \in \mathbb{R}^{2n} \times \mathbb{R}_{\geq 0}$ and for $\beta \geq 2\gamma/\mu$,
\begin{align}
&\left\langle\nabla V(\tilde{x}), \tilde{f}(\tilde{x}, t)\right\rangle \nonumber \\ 
&\leq -\frac{2\gamma}{\beta}|p|^{2} - |e_{y}|^{2} - 2\mathcal{N}_{t}\left\langle \tilde{\theta}, P_{\mu}\tilde{\theta}\right\rangle - 4\mathcal{N}_{t}\left\langle p, P_{\mu}\tilde{\theta}\right\rangle \nonumber \\
&\qquad + \mathcal{N}_{t}\left\langle\frac{4}{\beta}P_{\mu}\tilde{\theta}, \; \beta p\right\rangle \nonumber \\
&\leq -2\mathcal{N}_{t}\tilde{\theta}^{T}P_{\mu}\tilde{\theta} - \frac{2\beta}{\gamma}|p|^{2} \label{eq:Y_CL} \\
&\leq -2\tilde{\theta}^{T}P_{\mu}\tilde{\theta} - \frac{2\beta}{\gamma}|p|^{2} \eqqcolon Y(\tilde{x}). \nonumber
\end{align}
Due to Lemma \ref{thm:SR}, $Y$ is negative definite. Hence, $V$ is a Lyapunov function for \eqref{eq:ht_tilde_CL}, and the origin of \eqref{eq:ht_tilde_CL} is UGAS.
\end{proof}

\begin{theorem}
\label{thm:ugas_ht_normalized_CL}
Under Assumption \ref{assumption_SR}, with $\mathcal{N}_{t}$ given by \eqref{eq:Nt}, if $\beta \geq 2\gamma/\mu$, the point $(\theta^{*}, \theta^{*})$ is UGAS for \eqref{eq:ht_normalized_CL}.
\end{theorem}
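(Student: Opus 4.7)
The plan is to mirror the structure of the proof of Theorem~\ref{thm:ugas_ht_normalized} while swapping its PE-based Lyapunov analysis for the CL-augmented Lyapunov function from Theorem~\ref{thm:ugas_ht_CL}. After introducing error coordinates $\tilde{\theta} \coloneqq \theta - \theta^{*}$, $p \coloneqq \vartheta - \theta$, $\tilde{x} \coloneqq (\tilde{\theta}, p)$ and using $B(\theta, \mu) = P_\mu \tilde{\theta}$, the error system associated with \eqref{eq:ht_normalized_CL} has a right-hand side $\tilde{f}(\tilde{x},t)$ that is exactly $\mathcal{N}_t^{-1}$ times the right-hand side of \eqref{eq:ht_tilde_CL}; this is the CL-analogue of the scaling relationship exploited between \eqref{eq:ht_normalized_tilde} and \eqref{eq:ht_tilde}.

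I would take $V$ from \eqref{eq:V_CL} (which is time-invariant) and use this scaling together with the chain rule. Invoking the intermediate bound underlying \eqref{eq:Y_CL} — the one that retains the factor $\mathcal{N}_t$ in front of the $P_\mu$ term, prior to the $\mathcal{N}_t \geq 1$ estimate — yields, for $\beta \geq 2\gamma/\mu$,
\begin{align*}
\dot V \leq -\tfrac{2\beta}{\gamma\mathcal{N}_t}|p|^2 - 2\tilde{\theta}^T P_\mu \tilde{\theta} \eqqcolon Y_0(\tilde{x}, \phi) \leq 0,
\end{align*}
which already delivers UGS. The main obstacle is that the coefficient of $|p|^2$ in $Y_0$ vanishes whenever $|\phi(t)|$ is large, so $Y_0$ cannot be bounded above by any class-$\mathcal{K}_\infty$ function of $|\tilde{x}|$; hence $V$ alone is only a weak Lyapunov function and does not by itself deliver UGAS.

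To recover UGA I would invoke Matrosov's theorem \cite{Loria05a} with the auxiliary function $V_1(\tilde{x}) \coloneqq -\tilde{\theta}^T p$, which is time-invariant and locally bounded by $\tfrac{1}{2}|\tilde{x}|^2$. Differentiating along $\tilde{f}$, bounding the cross term $\beta\,\tilde{\theta}^T p$ by Young's inequality, and using the uniform estimate $|e_y|^2/\mathcal{N}_t \leq |\tilde{\theta}|^2/\mu$ produces a $\phi$-independent bound
$$\dot V_1 \leq -\tfrac{\beta}{2}|p|^2 + \bigl(\tfrac{\beta}{2} + \tfrac{\gamma}{\mu}\bigr)|\tilde{\theta}|^2 + \gamma\,\tilde{\theta}^T P_\mu \tilde{\theta} \eqqcolon Y_1(\tilde{x}).$$
The Matrosov chain then closes cleanly: by Lemma~\ref{thm:SR}, $Y_0(\tilde{x},\phi) = 0$ forces $\tilde{\theta} = 0$, in which case $Y_1 = -\tfrac{\beta}{2}|p|^2 \leq 0$; and $Y_0 = Y_1 = 0$ additionally requires $p = 0$, so $\tilde{x} = 0$. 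The remaining regularity hypotheses (continuity of $Y_0, Y_1$ and local boundedness of $V, V_1$ uniformly in $t$) follow exactly as in the proof of Theorem~\ref{thm:ugas_ht_normalized}, yielding UGAS of the origin of the error system and hence UGAS of $(\theta^{*}, \theta^{*})$ for \eqref{eq:ht_normalized_CL}.
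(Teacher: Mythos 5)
Your computations check out, but you take a genuinely different and more roundabout route than the paper. The paper's proof stops exactly where you perceive an ``obstacle'': starting from the same scaling observation and the intermediate bound \eqref{eq:Y_CL}, it writes $\langle\nabla V(\tilde{x}),\tilde{f}(\tilde{x},t)\rangle \le -2\tilde{\theta}^{T}P_{\mu}\tilde{\theta} - \frac{2\beta}{\gamma\mathcal{N}_{t}}|p|^{2}$ and then simply bounds $\mathcal{N}_{t}\le 1+\mu M^{2}$ using boundedness of the regressor (the constant $M$ of Assumption \ref{assumption_PE}, implicitly carried over), so that $V$ of \eqref{eq:V_CL} is a \emph{strict} Lyapunov function with a time-independent negative definite upper bound $Y(\tilde{x})$, and UGAS follows at once with no Matrosov argument. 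Your Matrosov detour --- with $V_{1}(\tilde{x})=-\tilde{\theta}^{T}p$, whose derivative bound via Young's inequality and $|e_{y}|^{2}/\mathcal{N}_{t}\le|\tilde{\theta}|^{2}/\mu$ is correct --- is therefore unnecessary under the paper's implicit boundedness of $\phi$, but it buys something real: your $Y_{1}$ is $\phi$-independent, so the argument can in principle dispense with a uniform bound on the regressor, which Assumption \ref{assumption_SR} alone does not provide. One caveat, though: as written, your $Y_{0}$ takes $\phi$ itself as its auxiliary argument, and your closing sentence defers the regularity hypotheses to the proof of Theorem \ref{thm:ugas_ht_normalized}, which verifies local boundedness of $(\tilde{x},t)\mapsto\phi(t)$ precisely by invoking $|\phi(t)|\le M$ from Assumption \ref{assumption_PE}; the nested Matrosov theorem of \cite{Loria05a} does require the auxiliary signal entering the $Y_{i}$ to be bounded uniformly in $t$, so as stated your proof quietly reintroduces the very assumption you set out to avoid. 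The fix is easy and worth making explicit: feed in the bounded signal $\psi(t)\coloneqq 1/\mathcal{N}_{t}\in(0,1]$ and set $Y_{0}(\tilde{x},\psi)=-\frac{2\beta\psi}{\gamma}|p|^{2}-2\tilde{\theta}^{T}P_{\mu}\tilde{\theta}$; your zero-set chain already uses only $Y_{0}=0\Rightarrow\tilde{\theta}=0$ (via Lemma \ref{thm:SR}) followed by $Y_{1}=0\Rightarrow p=0$, so it still closes even when $\psi$ ranges down to $0$, and the conclusion of UGAS for \eqref{eq:ht_normalized_CL} stands.
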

\begin{proof}
Let $\tilde{\theta} \coloneqq \theta - \theta^{*}$, $p \coloneqq \vartheta - \theta$, and $\tilde{x} \coloneqq (\tilde{\theta}, p)$. Due to the fact that $y^{*}\left(t_{k}\right) = \phi^{T}\left(t_{k}\right)\theta^{*}$, we have that $B(\theta, \mu) = P_{\mu}\tilde{\theta}$ for any $\mu \in \Re_{\geq 0}$, and (reusing notation from previous proofs) \eqref{eq:ht_normalized_CL} can be written as
\begin{align}
\label{eq:ht_normalized_tilde_CL}
\dot{\tilde{x}} &= \tilde{f}(\tilde{x}, t) \coloneqq \left[\begin{array}{c}\beta p \\ -\beta p - \gamma \left(\frac{1}{\mathcal{N}_{t}} \nabla_{\theta} L_{t}(\tilde{\theta} + \theta^{*}) + P_{\mu}\tilde{\theta}\right)\end{array}\right].
\end{align}
To show that $(\theta^{*}, \theta^{*})$ is UGAS for \eqref{eq:ht_normalized_CL}, it suffices to show that the origin of \eqref{eq:ht_normalized_tilde_CL} is UGAS. Consider the Lyapunov function candidate \eqref{eq:V_CL}, which is radially unbounded, positive definite, and continuously differentiable. Observing that the right-hand side of \eqref{eq:ht_normalized_tilde_CL} can be obtained by multiplying the right-hand side of \eqref{eq:ht_tilde_CL} by $1/\mathcal{N}_{t}$, we have from \eqref{eq:Y_CL} that, for all $(\tilde{x}, t) \in \mathbb{R}^{2n} \times \mathbb{R}_{\geq 0}$ and for $\beta \geq 2\gamma/\mu$,
\begin{align*}
&\left\langle\nabla V(\tilde{x}), \tilde{f}(\tilde{x}, t)\right\rangle \\
&\qquad\leq -2\tilde{\theta}^{T}P_{\mu}\tilde{\theta} - \frac{2\beta}{\gamma\mathcal{N}_{t}}|p|^{2} \\
&\qquad\leq -2\tilde{\theta}^{T}P_{\mu}\tilde{\theta} - \frac{2\beta}{\gamma(1+\mu M^{2})}|p|^{2} \eqqcolon Y(\tilde{x}).
\end{align*}
Due to Lemma \ref{thm:SR}, $Y$ is negative definite. Hence, $V$ is a Lyapunov function for \eqref{eq:ht_normalized_tilde_CL}, and the origin of \eqref{eq:ht_normalized_tilde_CL} is UGAS.
\end{proof}

One benefit of the CL feature is that it can be implemented using only the sampled data, mitigating the practical costs that may be incurred by continuously collecting data online. For example, the following system can be interpreted as a novel high-order tuner which does not require continuous online measurement of $\phi(t)$ and $y(t)$:
\begin{align}
\label{eq:ht_B}
\dot{x} &= f(x, t) \coloneqq \left[\begin{array}{c}-\beta(\theta - \vartheta) \\ - \gamma B(\theta, \mu)\end{array}\right].
\end{align}
In this case, an additional advantage is that the UGAS property can be established under no restriction on the relative magnitudes of $\beta$, $\gamma$, and $\mu$.
\begin{theorem}
\label{thm:ugas_ht_B}
Under Assumption \ref{assumption_SR}, for any positive $\beta$ and $\gamma$, and for any nonnegative $\mu$, the point $(\theta^{*}, \theta^{*})$ is UGAS for \eqref{eq:ht_B}.
\end{theorem}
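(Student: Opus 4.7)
The crucial simplification in Theorem \ref{thm:ugas_ht_B} relative to Theorem \ref{thm:ugas_ht_CL} is that the vector field in \eqref{eq:ht_B} no longer depends on $t$: $B(\cdot,\mu)$ is built entirely from the fixed recorded data $\{(\phi(t_k),y^*(t_k))\}_{k=1}^N$, and the $\nabla_\theta L_t(\theta)$ term is absent. I would first introduce the familiar error coordinates $\tilde\theta \coloneqq \theta-\theta^*$, $p \coloneqq \vartheta-\theta$. Using $y^*(t_k)=\phi^T(t_k)\theta^*$, the identity $B(\theta,\mu)=P_\mu\tilde\theta$ (already exploited in the proof of Theorem \ref{thm:ugas_ht_CL}) reduces \eqref{eq:ht_B} to the \emph{linear time-invariant} system
\begin{align*}
\dot{\tilde\theta} = \beta p,\qquad \dot p = -\beta p - \gamma P_\mu \tilde\theta.
\end{align*}
Showing UGAS of $(\theta^*,\theta^*)$ for \eqref{eq:ht_B} then reduces to showing GAS of the origin for this LTI system (the two notions coinciding in the LTI setting).

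Next, I would reuse the Lyapunov candidate $V$ from \eqref{eq:V_CL}, which is positive definite, radially unbounded, and continuously differentiable since $P_\mu\succeq \delta_\mu I_n$ by Lemma \ref{thm:SR}. Computing $\langle\nabla V(\tilde x),\tilde f(\tilde x)\rangle$ along the new dynamics is routine. The essential observation is that the two cross terms of the form $\tilde\theta^T P_\mu p$ and $p^T P_\mu \tilde\theta$ that arise from differentiating $\frac{2}{\beta}\tilde\theta^T P_\mu\tilde\theta$ and from the coupling $-\gamma P_\mu\tilde\theta$ in $\dot p$ cancel \emph{exactly} by symmetry of $P_\mu$, leaving
\begin{align*}
\dot V \;=\; -2\,\tilde\theta^T P_\mu \tilde\theta \;-\; \frac{2\beta}{\gamma}|p|^2.
\end{align*}
Lemma \ref{thm:SR} then renders $\dot V$ negative definite, and the standard Lyapunov theorem yields UGAS (indeed UGES, by linearity).

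The only step requiring care is tracking signs so as to observe the above cancellation; this is the step that removes the need for any condition relating $\beta$, $\gamma$, and $\mu$. In the earlier Lyapunov estimates the condition $\beta\geq 2\gamma/\mu$ was used to absorb a cross term of the form $4|p||\phi||e_y|$ produced by $\nabla_\theta L_t(\theta)$; since \eqref{eq:ht_B} omits that term entirely, no such restriction survives, which accounts for the strengthened hypothesis of Theorem \ref{thm:ugas_ht_B}.
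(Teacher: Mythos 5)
Your proposal is correct and follows essentially the same route as the paper: reduce \eqref{eq:ht_B} to the linear time-invariant error system in $(\tilde\theta,p)$ via $B(\theta,\mu)=P_\mu\tilde\theta$ and apply a Lyapunov function of the form ``$|\tilde\theta+p|^2+|p|^2+c\,\tilde\theta^TP_\mu\tilde\theta$'' whose derivative is negative definite by Lemma \ref{thm:SR}, with no condition on $\beta,\gamma,\mu$. The paper's candidate is just $\tfrac{\gamma}{2}$ times the one you reuse from \eqref{eq:V_CL}, so your computed derivative $-2\tilde\theta^TP_\mu\tilde\theta-\tfrac{2\beta}{\gamma}|p|^2$ matches the paper's $-\gamma\tilde\theta^TP_\mu\tilde\theta-\beta|p|^2$ up to that positive scaling.
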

\begin{proof}
Let $\tilde{\theta} \coloneqq \theta - \theta^{*}$, $p \coloneqq \vartheta - \theta$, and $\tilde{x} \coloneqq (\tilde{\theta}, p)$. Due to the fact that $y^{*}\left(t_{k}\right) = \phi^{T}\left(t_{k}\right)\theta^{*}$, we have that $B(\theta, \mu) = P_{\mu}\tilde{\theta}$ for any $\mu \in \Re_{\geq 0}$, and (reusing notation from previous proofs) \eqref{eq:ht_B} can be written as
\begin{align}
\label{eq:ht_normalized_tilde_B}
\dot{\tilde{x}} &= \tilde{f}(\tilde{x}, t) \coloneqq \left[\begin{array}{c}\beta p \\ -\beta p - \gamma P_{\mu}\tilde{\theta}\end{array}\right].
\end{align}
To show that $(\theta^{*}, \theta^{*})$ is UGAS for \eqref{eq:ht_B}, it suffices to show that the origin of \eqref{eq:ht_normalized_tilde_B} is UGAS. Consider the Lyapunov function candidate
\begin{align}
V(\tilde{x}) &\coloneqq \frac{1}{2}|\tilde{\theta} + p|^{2} + \frac{1}{2}|p|^{2} + \frac{\gamma}{\beta}\tilde{\theta}^{T}P_{\mu}\tilde{\theta}, \nonumber
\end{align}
which is radially unbounded, positive definite, and continuously differentiable. For all $\tilde{x} \in \Re^{2n}$, we have
\begin{align*}
&\left\langle\nabla V(\tilde{x}), \tilde{f}(\tilde{x}, t)\right\rangle = \left\langle\tilde{\theta} + p + \frac{2\gamma}{\beta}P_{\mu}\tilde{\theta}, \beta p\right\rangle \\
&\qquad\qquad- \langle\tilde{\theta} + 2p, \beta p + \gamma P_{\mu}\tilde{\theta}\rangle \\
&\qquad= \langle p, 2\gamma P_{\mu}\tilde{\theta}\rangle - \gamma\tilde{\theta}^{T}P_{\mu}\tilde{\theta} - \beta|p|^{2} - \langle 2p, \gamma P_{\mu}\tilde{\theta}\rangle \\
&\qquad= -\gamma\tilde{\theta}^{T}P_{\mu}\tilde{\theta} - \beta|p|^{2} \eqqcolon Y(\tilde{x}).
\end{align*}
Due to Lemma \ref{thm:SR}, $Y$ is negative definite. Hence, $V$ is a Lyapunov function for \eqref{eq:ht_normalized_tilde_B}, and the origin of \eqref{eq:ht_normalized_tilde_B} is UGAS.
\end{proof}

As remarked in Section \ref{sec:PE}, UGAS is equivalent to UGES for linear time-varying systems, and therefore Theorems \ref{thm:ugas_ht_CL}, \ref{thm:ugas_ht_normalized_CL}, and \ref{thm:ugas_ht_B} also establish UGES, due to linearity of the systems \eqref{eq:ht_CL}, \eqref{eq:ht_normalized_CL}, and \eqref{eq:ht_B}.

\subsection{Online implementation}
\label{sec:online_CL}

Standard analyses of stability and convergence for CL methods assume the availability of recorded (offline) data that satisfies a rank condition similar to Assumption \ref{assumption_SR} \cite{chowdhary-thesis}, \cite{chowdhary-2010}, and the same assumption is made in Theorems \ref{thm:ugas_ht}-\ref{thm:ugas_ht_B}. In the absence of such data, CL can be implemented via online criteria that aim to ensure that the rank condition is satisfied after executing the algorithm for some period of time. When evaluating the performance of high-order tuners in Section \ref{sec:numerical}, the simple online criterion of \cite[Sec. IV]{chowdhary-2011} is considered. Namely, the number of recorded data points at time $t$ is denoted $N(t)$, and a new data point is added to the data set $\left\{(\phi\left(t_{k}), \; y^{*}(t_{k})\right)\right\}_{k=1}^{N(t)}$ at time $t$ if the following condition is satisfied for some user-specified parameter $\varepsilon \in \Re_{>0}$:
\begin{align}
    \frac{|\phi(t) - \phi(t_{N(t)})|^{2}}{|\phi(t)|} \geq \varepsilon. \label{eq:online_criterion}
\end{align}
The data set is initialized as $\left\{(\phi\left(t_{1}), \; y^{*}(t_{1})\right)\right\}$ with $t_{1} = 0$ and $N(0) = 1$. The condition \eqref{eq:online_criterion} is evaluated until $N(t)$ reaches a user-specified maximum, an integer $\overline{N} \geq n$. Equation \eqref{eq:B} is implemented in a time-dependent fashion, i.e., with $N$ replaced by $N(t)$, and it becomes time-independent after $N(t)$ reaches $\overline{N}$. 

For this online implementation of CL, the stability analysis is beyond the scope of the current work and, to our knowledge, has not been pursued in previous studies of CL methods. Previous studies of online implementations have focused not on stability analysis of the resulting dynamics but on showing that, in certain circumstances, an online criterion ensures that a rank-condition is satisfied in finite time or that the convergence rate of the resulting dynamics is maximized after a rank-condition has been met \cite{chowdhary-2011}.

\section{Soft-reset methods for high-order tuners}
\label{sec:softreset}

For applications that demand high efficiency and precision, we explore the possibility that resetting the state $\vartheta - \theta$ to zero under certain conditions can benefit performance, motivated by works such as \cite{le-2021} and \cite{le-2021b}. Defining $f$ as in \eqref{eq:ht_CL}, we incorporate resets in \eqref{eq:ht_CL} via a soft-reset approach, similar to ideas proposed in \cite{le-2021b} and \cite{teel-2021}, resulting in a differential inclusion given by
\begin{subequations}
\label{eq:ht_CL_softreset}
\begin{align}
&\varphi(x, t) \coloneqq \left\langle \vartheta - \theta, \; \nabla_{\theta} L_{t}\left(\theta\right)\right\rangle, \\
&\dot{x} \in f(x, t) + \beta_{r}\left(\vphantom{\frac{}{}} \mbox{\textrm SGN}(\varphi(x, t)) + 1 \right)  \left[\begin{array}{c}-(\theta - \vartheta)\mathcal{N}_{t} \\ 0\end{array}\right],
\end{align}
\end{subequations}
where $\beta_{r} \in \Re_{>0}$. The mapping $\mbox{\rm `SGN'}$ is the set-valued sign mapping, i.e., $\mbox{\rm SGN}(s)$ is equal to $s/|s|$ when $s \neq 0$ and $\mbox{\rm SGN}(0)=[-1,1]$. Consequently, when $\varphi(x, t) < 0$, \eqref{eq:ht_CL_softreset} behaves like \eqref{eq:ht_CL}. On the other hand, when $\varphi(x, t) > 0$, the reset behavior becomes active, causing \eqref{eq:ht_CL_softreset} to behave like \eqref{eq:ht_CL_softreset} but with $\beta$ increased to a value of $\beta + 2\beta_{r}$. Based on these observations, a UGAS result for \eqref{eq:ht_CL_softreset} can be obtained under the same conditions of Theorem \ref{thm:ugas_ht_CL}, following similar steps in the proof and making use of Lyapunov conditions for differential inclusions.

Defining $f$ as in \eqref{eq:ht_normalized_CL}, soft resets are implemented in \eqref{eq:ht_normalized_CL} according to the following differential inclusion:
\begin{subequations}
\label{eq:ht_normalized_CL_softreset}
\begin{align}
&\varphi(x, t) \coloneqq \left\langle \vartheta - \theta, \; \frac{1}{\mathcal{N}_{t}}\nabla_{\theta} L_{t}\left(\theta\right)\right\rangle, \\
\dot{x} &\in f(x, t) + \beta_{r}\left( \vphantom{\frac{}{}}  \mbox{\textrm SGN}(\varphi(x, t)) + 1 \right)  \left[\begin{array}{c}-(\theta - \vartheta) \\ 0\end{array}\right],
\end{align}
\end{subequations}
with $\beta_{r} \in \Re_{>0}$. A UGAS result for \eqref{eq:ht_normalized_CL_softreset} can be obtained using similar observations as those made above for \eqref{eq:ht_CL_softreset}, following similar steps as in the proof of Theorem \ref{thm:ugas_ht_normalized_CL} and making use of Lyapunov conditions for differential inclusions.

In contrast with the soft-reset approach, the authors of \cite{ochoa-2021} have studied various hard-reset approaches in high-order algorithms for parameter identification.

\section{Numerical results}
\label{sec:numerical}

%\subsection{Online learning}

\begin{figure}
    \centering
    \includegraphics[width=0.6\linewidth]{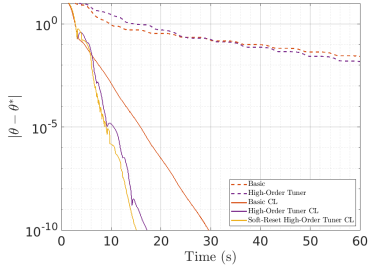}
    \caption{Evolution of the parameter error norm over time, for the algorithms having an unnormalized gradient term.}
    \label{fig:CL}
\end{figure}

Figure~\ref{fig:CL} compares the efficiency of the basic gradient method given by
\begin{align}
\label{eq:basic}
\dot{\theta} &= -\nabla_{\theta} L_{t}(\theta)
\end{align}
and of the high-order tuner given by \eqref{eq:ht}, along with their respective CL counterparts given by
\begin{align}
\label{eq:basic_CL}
\dot{\theta} &= -\gamma \left(\nabla_{\theta} L_{t}(\theta) + B(\theta, 0)\right)
\end{align}
and equation \eqref{eq:ht_CL}, using the online implementation described in Section \ref{sec:online_CL} with $\varepsilon = 1$ and $\overline{N} = 10$. The soft-reset system is given by \eqref{eq:ht_CL_softreset}. Following \cite[Sec. 5.7.1]{gaudio-dissertation}, we set $\phi(t) = [1, \; 1 + 3\sin(t), \; 1 + 3\cos(t)]^{T}$ for all $t \geq 0$ and randomly initialize the value of $\theta$. We initialize $\vartheta$ at the same value as $\theta$. For all algorithms, $\gamma = 0.1$, and $\mu = 0.2$. For all high-order tuners, we set $\beta = 1$ to satisfy the requirements of Theorems \ref{thm:ugas_ht} and \ref{thm:ugas_ht_CL}. These parameter values are intended to match those chosen in \cite[Sec. 5.7]{gaudio-dissertation}. For the soft-reset system, $\beta_{r} = 4$. All systems are implemented by Euler discretization with a stepsize of $10^{-3}$.

\begin{figure}
    \centering
    \includegraphics[width=0.6\linewidth]{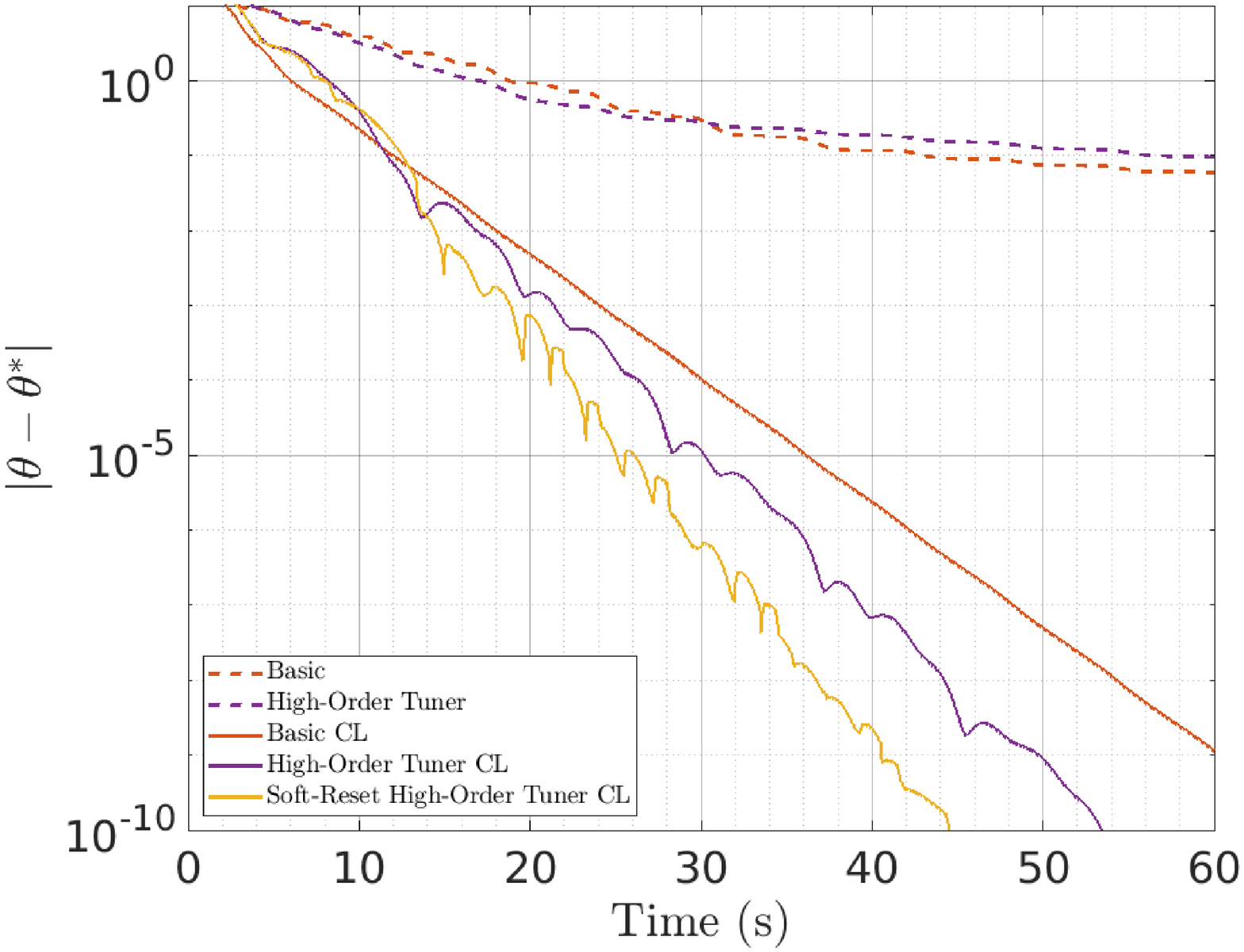}
    \caption{Evolution of the parameter error norm over time, for the algorithms having a normalized gradient term.}
    \label{fig:CL_normalized}
\end{figure}

Under the same conditions as in Figure \ref{fig:CL}, Figure \ref{fig:CL_normalized} compares the efficiency of the basic normalized gradient method given by
\begin{align}
\label{eq:basic_normalized}
\dot{\theta} &= -\frac{\gamma}{\mathcal{N}_{t}} \nabla_{\theta} L_{t}(\theta)
\end{align}
and of the high-order tuner given by \eqref{eq:ht_normalized}, along with their respective CL counterparts given by
\begin{align}
\label{eq:basic_normalized_CL}
\dot{\theta} &= -\gamma \left(\frac{1}{\mathcal{N}_{t}} \nabla_{\theta} L_{t}(\theta) + B(\theta, \mu)\right)
\end{align}
and equation \eqref{eq:ht_normalized_CL}. To satisfy the requirements of Theorems \ref{thm:ugas_ht_normalized} and \ref{thm:ugas_ht_normalized_CL}, the values of $\beta$, $\gamma$, $\mu$, and $\beta_{r}$ are chosen to be the same as in the experiment shown by Figure \ref{fig:CL}.

\printbibliography
%\bibliographystyle{unsrt}
%\bibliography{refs}

\end{document}